\documentclass{scrartcl}

\usepackage[british]{babel}

\usepackage{amsmath,amssymb,amsthm}
\theoremstyle{plain}
\newtheorem{theorem}{Theorem}
\newtheorem{lemma}[theorem]{Lemma}
\newtheorem{cor}[theorem]{Corollary}
\theoremstyle{definition}
\newtheorem*{remark}{Remark}
\theoremstyle{remark}

\usepackage[perpage,symbol*]{footmisc}

\let\deph\emph
\DeclareMathOperator{\sgn}{sgn}
\def\R{\mathbb{R}}
\def\Q{\mathbb{Q}}
\def\tr{^{\mathsf T}}
\def\Y{\{-1,1\}^{n}}
\def\Deltayz{\Delta_{y,z}}
\def\Deltatz{\Delta_{t,z}}

\usepackage{bbm}
\def\1{\mathbbm{1}}

\begin{document}

\title{P-matrix recognition is co-NP-complete}
\author{Jan Foniok%
\\
\small ETH Zurich, Institute for Operations Research\\[-5pt]
\small R\"amistrasse 101, 8092 Zurich, Switzerland\\[-5pt]
\small\texttt{foniok@math.ethz.ch}%
}
\date{18 October 2007}

\maketitle

This is a summary of the proof by G.E.~Coxson~\cite{Cox:The-P-matrix}
that P-matrix recognition is co-NP-complete. The result follows by
a reduction from the MAX CUT problem using results of S.~Poljak and
J.~Rohn~\cite{PolRoh:Checking-robust}.

\section{Considered problems}

Our main interest is the complexity of deciding whether an input matrix
is a P-matrix. A \deph{P-matrix} is a square matrix $M\in\R^{n\times
n}$ such that all its principal minors are positive. Such matrices were
first studied by Fiedler and Pt\'ak~\cite{FiePta:On-matrices-with}.

\subsection*{P-MATRIX}
\begin{tabular}[t]{lp{11cm}}
\textbf{Instance:}&
A square matrix $M\in\Q^{n\times n}$.\\
\textbf{Question:}&
Are all the principal minors of~$M$ positive?
\end{tabular}

\bigskip

To start with, we use a well-known combinatorial problem.

\subsection*{SIMPLE MAX CUT}
\begin{tabular}[t]{lp{11cm}}
\textbf{Instance:}&
A graph $G=(V,E)$, a positive integer~$K$.\\
\textbf{Question:}&
Is there a partition of the vertex set~$V$ into sets $V_1$ and~$V_2$
such that the number of edges with one end in~$V_1$ and the other end
in~$V_2$ is at least~$K$?
\end{tabular}

\bigskip

Garey, Johnson and Stockmeyer~\cite{GarJohSto:Some-simplified} showed
that the SIMPLE MAX CUT problem is NP-complete.

\bigskip

The reduction from SIMPLE MAX CUT to P-MATRIX uses two intermediate
steps. The first of them is the computation of the $r$-norm of a matrix.

For an arbitrary matrix $A\in\R^{n\times n}$, let
\[r(A) = \max\left\{z\tr\! A y : z,y\in\Y\right\}. \]

\begin{remark}
The function $r$ is a matrix norm.
\end{remark}

\begin{proof}
For an arbitrary square matrix~$A$, we have $r(A)\ge0$ because $z\tr\!Ay = - (-z)\tr\!Ay$.
Moreover if $r(A)=0$, then $z\tr\!Ay=0$ for all choices of $z,y\in\Y$, hence $A=0$. If
$k\in\R$, then $z\tr(kA)y=k\cdot z\tr\!Ay$, so $r(kA)=|k|\cdot r(A)$.

Let $A,B\in\R^{n\times n}$. Then
\begin{multline*}
r(A+B) =\max\{z\tr(A+B)y: y,z\in\Y\}
=\max\{z\tr\!Ay + z\tr By: y,z\in\Y\}\\
\leq \max\{z\tr\!Ay : y,z\in\Y\} + \max\{z\tr By:y,z\in\Y\}\\
= r(A)+r(B).
\end{multline*}
Thus $r$ is also subadditive.
\end{proof}

The decision problem corresponding to $r$-norm computation is defined as follows.

\subsection*{MATRIX R-NORM}
\begin{tabular}[t]{lp{11cm}}
\textbf{Instance:}&
A matrix $A\in\Q^{n\times n}$ and a rational number~$K$.\\
\textbf{Question:}&
Is $r(A)\geq K$?
\end{tabular}

\bigskip

For the last of the decision problems considered here, we need the
notion of matrix interval.  If $A_-$ and $A_+$ are $n\times n$ real
matrices such that $A_- \leq A_+$ (that is, for each $r$ and~$s$ we
have $(A_-)_{r,s} \leq (A_+)_{r,s}$), then the \deph{matrix interval}%
	\footnote{This object is usually called an \emph{interval
	matrix}. Since it is actually an \emph{interval} and not a
	\emph{matrix}, I beg the reader to pardon my decision to call
	it an uncommon but appropriate name.}
$[A_-,A_+]$ is the set of all matrices~$A$ satisfying $A_-\le A \le A_+$.

A matrix interval is \deph{singular} if it contains a singular matrix;
otherwise it is \deph{non-singular}.

The decision problem we consider consists in testing whether a given
matrix interval is singular. We will see that this is a computationally
hard problem even when the difference $A_+-A_-$ has rank~$1$.

\subsection*{RK1-MATRIX-INTERVAL SINGULARITY}
\begin{tabular}[t]{lp{11cm}}
\textbf{Instance:}&
A non-singular matrix $A\in\Q^{n\times n}$ and a non-negative
matrix~$\Delta\in\Q^{n\times n}$ of rank~$1$.\\
\textbf{Question:}&
Is the matrix interval $[A-\Delta, A+\Delta]$ singular?
\end{tabular}

\bigskip

The rest of this exposition contains three polynomial reductions of these
problems, ultimately proving that P-MATRIX is co-NP-complete.

\section{Reduction from SIMPLE MAX CUT to MATRIX R-NORM}

Let $G=(V,E)$ be an undirected graph with $n=|V|$ and let $\ell=2|E|+1$. If $A(G)$ is the
adjacency matrix of~$G$, define $A=\ell\cdot I_n - A(G)$. Thus
\[ A_{u,v} =
\begin{cases}
\ell&	\text{if $u=v$,}\\
-1&	\text{if $uv\in E$,}\\
0&	\text{otherwise.}
\end{cases}
\]

Observe that for $y,z\in\Y$ we have $z\tr\!Ay \le y\tr\!Ay$ because of
the choice of~$\ell$.  Hence $r(A)=y\tr\!Ay$ for some $y\in\Y$.

Let $S\subseteq V$~be defined by $S=\{u: y_u=1\}$ and let $m'$~be
the number of edges of~$G$ with one end in~$S$ and the other end
in~$V\setminus S$. In this way, $m'$~is the size of the cut defined by
$S$ and~$V\setminus S$.

Then
\[y\tr\!Ay = n\ell + 4m' - 2|E|\]
and therefore there is a cut in~$G$ of size at least~$K$ if and only if
$r(A)\ge n\ell - 2|E| + 4K$.

The described reduction (by Poljak and Rohn~\cite{PolRoh:Checking-robust})
establishes the hardness of computing the $r$-norm.

\begin{theorem}
MATRIX R-NORM is NP-complete, even if input is restricted to non-singular matrices.
\end{theorem}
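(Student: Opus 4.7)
My plan is to split the statement into three pieces: membership in NP, NP-hardness via the reduction already sketched in the excerpt, and a check that the reduction's output matrix is non-singular (which is what yields the ``even if'' strengthening).

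For membership in NP, I would take as certificate a pair $(y,z)\in\Y\times\Y$ witnessing $z\tr\!Ay\geq K$. The certificate has size linear in~$n$, and verifying it amounts to a polynomial-time arithmetic computation on rationals, so MATRIX R-NORM lies in NP.

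For NP-hardness I would make rigorous the two numerical facts that the excerpt states without full proof. The first is that the maximum in the definition of~$r(A)$ is attained with $z=y$: for fixed~$y$ the optimal $z$ satisfies $z_u=\sgn((Ay)_u)$, so it suffices to check $y_u(Ay)_u>0$ for every~$u$. The diagonal contribution equals $\ell$, while the off-diagonal contribution is bounded in absolute value by $\deg(u)\leq 2|E|<\ell$; this is precisely the reason for the choice $\ell=2|E|+1$. The second fact is the identity
\[
y\tr\!Ay = n\ell - 2\sum_{uv\in E} y_u y_v = n\ell - 2\bigl(|E|-2m'\bigr) = n\ell + 4m' - 2|E|,
\]
obtained by noting that an edge $uv$ contributes $-1$ to the middle sum exactly when it crosses the partition induced by~$y$. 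Taking the maximum over~$y$ and comparing with the threshold $n\ell-2|E|+4K$ gives the desired equivalence with the existence of a cut of size at least~$K$, so SIMPLE MAX CUT reduces to MATRIX R-NORM in polynomial time.

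For the non-singularity strengthening, I would observe that the eigenvalues of $A=\ell I_n - A(G)$ are $\ell-\lambda$ where $\lambda$ runs over the spectrum of the adjacency matrix~$A(G)$. The spectral radius of $A(G)$ is bounded by the maximum row sum, i.e.\ the maximum degree~$\Delta(G)$, and $\Delta(G)\leq 2|E|<\ell$ (trivially also in the edge-free case, where $A=I_n$). Hence every eigenvalue of~$A$ is strictly positive and $A$ is non-singular, so the reduction actually falls inside the restricted class of inputs. The only genuinely delicate step is the diagonal-dominance argument that underlies $z\tr\!Ay\leq y\tr\!Ay$; everything else is routine expansion or a standard spectral bound.
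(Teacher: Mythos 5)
Your proposal is correct and follows essentially the same route as the paper: NP membership via a guessed pair $(y,z)$, NP-hardness via the Poljak--Rohn reduction from SIMPLE MAX CUT, and the observation that the reduction's output $A=\ell I_n-A(G)$ is automatically non-singular, so the strengthened statement follows. The only cosmetic difference is that you certify non-singularity through a Gershgorin-type spectral bound ($\rho(A(G))\leq\Delta(G)\leq 2|E|<\ell$), whereas the paper simply notes that $A$ is strictly diagonally dominant; these are the same fact in two dialects, and your write-up usefully fills in the two arithmetic identities ($z\tr Ay\leq y\tr Ay$ and $y\tr Ay=n\ell+4m'-2|E|$) that the paper states without proof.
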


\begin{proof}
It follows from the reduction above that MATRIX R-NORM is NP-hard. Observe
that by the choice of~$\ell$ the matrix~$A$ in the reduction is strictly
diagonally dominant and thus non-singular.

A non-deterministic Turing machine can guess the values of $y,z\in\Y$
and check in polynomial time that $z\tr\!Ay\ge K$, so the problem is in
the class~NP.
\end{proof}

\section{Reduction from MATRIX R-NORM to RK1-MATRIX-INTERVAL SINGULARITY}

For a matrix $A\in\R^{n\times n}$ define
\[\rho_0(A) = \max\{|\lambda|: \text{$\lambda$ is a real eigenvalue of $A$}\}\]
and set $\rho_0(A)=0$ if $A$~has no real eigenvalue.

Further for a vector $y\in\R^n$ define $D(y)$ to be the diagonal $n\times
n$ matrix with diagonal vector~$y$.

The following fact was proved by Rohn~\cite{Roh:SysLinInt}.

\begin{lemma}
\label{lem:2}
Let $A$ be a real non-singular $n\times n$ matrix and let $\Delta$ be a
real non-negative $n\times n$ matrix. Then the matrix interval $[A-\Delta,
A+\Delta]$ is singular if and only if
$\rho_0(A^{-1}D(y)\Delta D(z)) \ge 1$ for some $y,z\in\Y$.
\end{lemma}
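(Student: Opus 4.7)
The plan is to prove the two implications separately. The ($\Leftarrow$) direction is a direct construction: if $A^{-1}D(y)\Delta D(z)$ has a real eigenvalue $\lambda$ with $|\lambda|\ge 1$ and eigenvector $v\ne 0$, then rearranging the eigenvalue equation gives $Av=\lambda^{-1}D(y)\Delta D(z)v$, so the matrix $B:=A-\lambda^{-1}D(y)\Delta D(z)$ satisfies $Bv=0$ and $|B-A|_{ij}=|\lambda|^{-1}\Delta_{ij}\le\Delta_{ij}$, which places $B\in[A-\Delta,A+\Delta]$ and certifies singularity of the interval.

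For the ($\Rightarrow$) direction I would start from a singular $B\in[A-\Delta,A+\Delta]$ and $x\ne 0$ with $Bx=0$. Writing $C:=A-B$ gives $Ax=Cx$ with $|C|\le\Delta$. The first step is to introduce sign vectors $z_j:=\sgn(x_j)$ and $y_i:=\sgn((Ax)_i)$, chosen arbitrarily where these quantities vanish, so that $D(z)x=|x|$ and $D(y)Ax=|Ax|$. The componentwise Oettli--Prager-type bound $|(Ax)_i|=|(Cx)_i|\le (\Delta|x|)_i$ then lets me define $\alpha_i:=|(Ax)_i|/(\Delta|x|)_i\in[0,1]$ (with the convention $0/0:=0$) and $s_i:=y_i\alpha_i\in[-1,1]$. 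A direct computation yields $Ax=D(s)\Delta D(z)x$, so the matrix $A-D(s)\Delta D(z)$ is singular for this specific $s$ lying in the cube $[-1,1]^n$.

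The heart of the argument is then to upgrade this ``diagonal-scaled'' singular perturbation to a scalar-scaled one of the shape $A-\mu D(y')\Delta D(z)$ with $\mu\in(0,1]$ and $y'\in\Y$. I would use the observation that
\[ f(s):=\det(A-D(s)\Delta D(z)) \]
is multilinear in $s\in\R^n$: the $i$-th row of $D(s)\Delta D(z)$ depends linearly on $s_i$ only, and the determinant is multilinear in rows. We have $f(0)=\det A\ne 0$ while $f$ vanishes at the point $s$ constructed above, so $f$ is not of constant sign on the box $[-1,1]^n$. Since multilinear functions attain their extrema on a box at vertices, there must be a vertex $y^-\in\Y$ at which $f(y^-)$ has the opposite sign to $f(0)$, unless $f$ already vanishes at some vertex (in which case $\mu=1$ works). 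Applying the intermediate value theorem to the one-variable polynomial $\mu\mapsto f(\mu y^-)$ on $[0,1]$ then yields $\mu^*\in(0,1]$ with $f(\mu^* y^-)=0$; equivalently, $A-\mu^* D(y^-)\Delta D(z)$ is singular, so $1/\mu^*\ge 1$ is a real eigenvalue of $A^{-1}D(y^-)\Delta D(z)$ and $\rho_0\ge 1$ with $y=y^-$ and the original $z$.

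The main obstacle is precisely this last reduction: a singular matrix inside the interval need not itself have the rank-one sign-pattern form $A-\mu D(y)\Delta D(z)$, and the family of such matrices is strictly smaller than the set of vertices of the cube of perturbations, so it is a priori unclear that a certificate of this restricted shape must exist. Noticing that the dependence of $\det$ on $s$ is multilinear---so that each row sign $s_i$ enters only the $i$-th row---is what collapses the question onto vertices of the cube and makes the one-dimensional IVT available, forcing the certifying eigenvalue to be real.
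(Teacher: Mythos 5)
Your proof is correct and follows essentially the same route as the paper. Both directions match: the forward direction builds the singular perturbation $A-\lambda^{-1}D(y)\Delta D(z)$ directly from the eigenvector, and the reverse direction constructs the same vector $s$ (your $s_i=(Ax)_i/(\Delta|x|)_i$ is the paper's $t_i$), exploits that $s\mapsto\det(A-D(s)\Delta D(z))$ is affine in each coordinate, passes to a vertex $y^-\in\Y$ of opposite sign (or a zero vertex), and applies the intermediate value theorem along the segment $\mu\mapsto\det(A-\mu D(y^-)\Delta D(z))$; one small point in your favour is that you correctly place $s$ in $[-1,1]^n$, whereas the paper's ``$t\in[0,1]^n$'' is a slip that should read $[-1,1]^n$.
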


\begin{proof}
For $y,z\in\Y$ let $\Deltayz$ denote the matrix $D(y)\Delta D(z)$.

First suppose that $A^{-1}\Deltayz$ has a real eigenvalue~$\lambda$
such that $|\lambda|\ge1$ and $A^{-1}\Deltayz x = \lambda x$ for some
$y,z\in\Y$ and a non-zero vector~$x$. Then
\begin{gather*}
\left(1-\tfrac{1}{\lambda}A^{-1}\Deltayz\right) x = 0,\\
\left(A-\tfrac{1}{\lambda}\Deltayz\right) x = 0.
\end{gather*}
Hence $A-(1/\lambda)\Deltayz$ is a singular matrix in the interval $[A-\Delta, A+\Delta]$
because
\[\left|\tfrac{1}{\lambda}\Deltayz\right|=
\left|\tfrac{1}{\lambda}D(y)\Delta D(z)\right| \leq \Delta. \]
Therefore the interval $[A-\Delta, A+\Delta]$ is singular.

To prove the converse, suppose that $B$ is a singular matrix, $B\in[A-\Delta,A+\Delta]$. Let
$x$ be a non-zero vector for which $Bx=0$.

For $i=1,2,\dotsc,n$ set
\[ t_i = \frac{(Ax)_i}{(\Delta|x|)_i}. \]
We claim that $t\in[0,1]^n$. Indeed, $|Ax|=|(A-B)x|\le \Delta|x|$
because $Bx=0$ and $B\in[A-\Delta,A+\Delta]$.

Moreover, set $z=\sgn x$. Then $D(z)x=|x|$ and
\[(A-\Deltatz)x = Ax- D(t)\Delta D(z) x =
Ax- D(t)\Delta|x| = 0\] by the definition of~$t$.
Thus the matrix $A-\Deltatz$ is a singular matrix in the interval
$[A-\Delta, A+\Delta]$.

Define $\psi(s)=\det(A-\Delta_{s,z})$. The function~$\psi$
is affine in each of the variables $s_1,\dotsc,s_n$. Since
$\psi(t)=\det(A-\Deltatz)=0$, either there exists $y\in\Y$ such that
$\det(A-\Deltayz)=0$, or there exist $y,y'\in\Y$ such that
$\det(A-\Deltayz)\cdot\det(A-\Delta_{y',z})<0$.

In the latter case, without loss of generality we may assume that
$\det A\cdot\det(A-\Deltayz)<0$. The function~$\phi$ defined
by $\phi(\alpha)=\det(A-\alpha\Deltayz)$ is continuous and
$\phi(0)\phi(1)<0$, so $\phi$~has a root in~$(0,1)$.

In either case, there exist $y\in\Y$ and $\alpha\in(0,1]$ such that
$\det(A-\alpha\Deltayz)=0$. Then
\begin{gather*}
\det\left(\tfrac{1}{\alpha}A-\Deltayz\right) = 0,\\
\det\left(\tfrac{1}{\alpha}I-A^{-1}\Deltayz\right) = 0,
\end{gather*}
hence $\frac{1}{\alpha}$ is a real eigenvalue of the matrix
$A^{-1}D(y)\Delta D(z)$ and $\frac{1}{\alpha}\ge1$, as we were
supposed to prove.
\end{proof}

This lemma provides a useful connection between singularity of matrix
intervals and a parameter $\rho_0$ dependent on the two matrices
$A$,~$\Delta$ that define the interval. Next we establish a connection
between $\rho_0$ and the $r$-norm of matrices.

From now on let $\1$ be the all-one vector~$(1,1,\dotsc,1)\in\R^n$
and let $J=\1\cdot\1\tr$ be the all-one $n\times n$ matrix.

\begin{lemma}
\label{lem:3}
Let $A\in\R^{n\times n}$ be a non-singular matrix, let $\alpha$ be a positive
real number and let $\Delta=\alpha J$. Then
\[\max\left\{\rho_0(A D(y)\Delta D(z)) : y,z\in\Y\right\}
	= \alpha\cdot r(A).\]
\end{lemma}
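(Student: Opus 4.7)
The plan is to exploit the rank-one structure of $\Delta = \alpha J$ to compute the real spectrum of $A D(y)\Delta D(z)$ explicitly, then recognise the resulting optimisation as exactly $\alpha\cdot r(A)$.

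First I would note that $\Delta = \alpha J = \alpha\1\1\tr$, and that $D(y)\1 = y$ and $\1\tr D(z) = z\tr$. Consequently
\[ A D(y)\Delta D(z) = \alpha (Ay)z\tr, \]
which is a matrix of rank at most~$1$. For any rank-one matrix $uv\tr$, the characteristic polynomial is $\lambda^{n-1}(\lambda - v\tr u)$, so its eigenvalues are $v\tr u$ (a real number) and $0$ with multiplicity~$n-1$. Hence all eigenvalues of $AD(y)\Delta D(z)$ are real, and
\[ \rho_0\!\bigl(A D(y)\Delta D(z)\bigr) = |\alpha z\tr\! Ay| = \alpha\,|z\tr\! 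Ay|. \]

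Taking the maximum over $y,z\in\Y$ and pulling out the positive constant~$\alpha$, I obtain
\[ \max_{y,z\in\Y}\rho_0\!\bigl(AD(y)\Delta D(z)\bigr) = \alpha\cdot\max_{y,z\in\Y}|z\tr\! Ay|. \]
The remaining step is to identify this last maximum with $r(A)$. Because $\Y$ is closed under negation and $(-z)\tr\! Ay = -z\tr\! Ay$, for every pair $(y,z)\in\Y\times\Y$ both $z\tr\! Ay$ and $-z\tr\! Ay$ arise as values of the bilinear form on $\Y\times\Y$. Thus $\max\{|z\tr\! Ay|:y,z\in\Y\} = \max\{z\tr\! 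Ay:y,z\in\Y\} = r(A)$, and the claimed equality follows.

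There is no real obstacle here: once the rank-one observation is made, the argument is essentially bookkeeping. The only point requiring a little care is the passage from the absolute value to $r(A)$ itself; but this uses only the sign-flipping trick already invoked in the earlier remark that $r$ is non-negative. The hypothesis that $A$ is non-singular is not needed for the identity, although it ensures $Ay\neq 0$ so that $AD(y)\Delta D(z)$ actually has rank one for every nonzero $y,z$.
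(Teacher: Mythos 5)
Your proof is correct and follows essentially the same route as the paper: both reduce $AD(y)\Delta D(z)$ to the rank-one form $\alpha(Ay)z\tr$ and identify its unique nonzero real eigenvalue as $\alpha\,z\tr\!Ay$ (the paper manipulates the eigenvalue equation directly, you cite the characteristic polynomial of a rank-one matrix; same fact). Your explicit justification of $\max\{|z\tr\!Ay|\} = r(A)$ via sign-flipping is a small point the paper leaves implicit, and is a welcome clarification.
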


\begin{proof}
First observe that $D(y)\Delta D(z)=\alpha\cdot D(y)\1\cdot\1\tr D(z)=
\alpha\cdot yz\tr$ for arbitrary $y,z\in\Y$.  If $\lambda$~is a non-zero
real eigenvalue of~$\alpha\cdot A yz\tr$ and $x$~is a non-zero vector
such that
\begin{align*}
\alpha\cdot A yz\tr x &= \lambda x \ne 0,\\
\intertext{then $z\tr x \ne 0$ and}
\alpha\cdot z\tr\!A yz\tr x &= \lambda\cdot z\tr x,\\
\alpha\cdot z\tr\!A y &= \lambda.
\end{align*}
Thus $\rho_0(A D(y)\Delta D(z))=\alpha\cdot|z\tr\!A y|$.
Hence
\begin{multline*}
\max\left\{\rho_0(A D(y)\Delta D(z)) : y,z\in\Y\right\}\\
	=\alpha\cdot\max\left\{|z\tr\!A y| : y,z\in\Y\right\}
	=\alpha\cdot r(A).
\end{multline*}
\end{proof}

Now everything is set for Poljak and Rohn's reduction~\cite{PolRoh:Checking-robust}.

\begin{theorem}
Let $A\in\R^{n\times n}$ be a non-singular matrix, let $K$~be a positive
real number and let $\Delta=(1/K)\cdot J$. Then $r(A)\ge K$ if and only if
the matrix interval $[A^{-1}-\Delta, A^{-1}+\Delta]$ is singular.
\end{theorem}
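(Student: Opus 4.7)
The plan is to combine Lemma~\ref{lem:2} and Lemma~\ref{lem:3} essentially algebraically, using $A^{-1}$ as the ``central'' matrix of the interval. First I would check that the hypotheses of Lemma~\ref{lem:2} are satisfied for the pair $(A^{-1},\Delta)$: the matrix $A^{-1}$ is non-singular because $A$ is, and $\Delta=(1/K)J$ is non-negative because $K>0$ and $J\ge 0$. Applying the lemma then says that $[A^{-1}-\Delta, A^{-1}+\Delta]$ is singular if and only if there exist $y,z\in\Y$ with
\[ \rho_0\bigl((A^{-1})^{-1} D(y)\Delta D(z)\bigr) \ge 1, \]
and $(A^{-1})^{-1}=A$, so this reads $\rho_0(A\,D(y)\Delta D(z))\ge 1$ for some $y,z\in\Y$.

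Next I would invoke Lemma~\ref{lem:3} with $\alpha=1/K$, which directly evaluates
\[ \max\bigl\{\rho_0(A\,D(y)\Delta D(z)) : y,z\in\Y\bigr\} = \tfrac{1}{K}\,r(A). \]
Therefore the existential statement ``$\rho_0(A\,D(y)\Delta D(z))\ge 1$ for some $y,z\in\Y$'' is equivalent to $(1/K)\,r(A)\ge 1$, i.e.\ $r(A)\ge K$. Chaining the two equivalences yields the theorem.

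I do not anticipate any real obstacle: both lemmas were set up precisely so that this chain of equivalences goes through, and the only bookkeeping is to match $A$ with $(A^{-1})^{-1}$ and to choose $\alpha=1/K$ so that the factor $1/K$ on the right-hand side of Lemma~\ref{lem:3} cancels against the threshold $1$ coming from Lemma~\ref{lem:2}. The proof should be just a few lines long.
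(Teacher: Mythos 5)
Your proposal is correct and follows exactly the same route as the paper: apply Lemma~\ref{lem:2} to the pair $(A^{-1},\Delta)$, note $(A^{-1})^{-1}=A$, and then use Lemma~\ref{lem:3} with $\alpha=1/K$ to turn the $\rho_0$-condition into $r(A)\ge K$. The only difference is that you spell out the hypothesis-checking, which the paper leaves implicit.
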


\begin{proof}
By Lemma~\ref{lem:2}, the matrix interval $[A^{-1}-\Delta,
A^{-1}+\Delta]$ is singular if and only if
$\rho_0(AD(y)\Delta D(z)) \ge 1$ for some $y,z\in\Y$.
By Lemma~\ref{lem:3},
$\rho_0(AD(y)\Delta D(z)) \ge 1$ for some $y,z\in\Y$
if and only if $r(A) \ge K$.
\end{proof}

\begin{cor}
\label{cor:sing}
RK1-MATRIX-INTERVAL SINGULARITY is NP-hard.
\qed
\end{cor}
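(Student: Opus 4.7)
The plan is to get NP-hardness of RK1-MATRIX-INTERVAL SINGULARITY by a straightforward polynomial reduction from MATRIX R-NORM restricted to non-singular inputs, which is NP-hard by the earlier theorem, using the theorem immediately preceding the corollary as the translation.

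First I would take an arbitrary instance $(A,K)$ of MATRIX R-NORM with $A\in\Q^{n\times n}$ non-singular. The trivial case $K\le 0$ can be dismissed immediately: since $r(A)\ge 0$ by the remark on the $r$-norm, the answer in that case is always yes, so the reduction can return any fixed yes-instance of RK1-MATRIX-INTERVAL SINGULARITY. Assume therefore $K>0$.

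Next I would compute $B = A^{-1}$ and $\Delta = (1/K)\cdot J$ and output the pair $(B,\Delta)$ as the instance of RK1-MATRIX-INTERVAL SINGULARITY. I would then check that this output is a legitimate instance: $B$ is non-singular because $A$ is; $\Delta = (1/K)\cdot\1\1\tr$ is the product of two non-zero vectors (up to scaling by $1/K>0$), so it is a non-negative rank-$1$ rational matrix. By the theorem preceding the corollary, $r(A)\ge K$ if and only if $[B-\Delta,B+\Delta]$ is singular, so the reduction preserves the answer.

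The only thing that needs care is the polynomial-time claim, and the main (mild) obstacle is checking the bit complexity of $A^{-1}$: entries of $A^{-1}$ are quotients of cofactors of~$A$, and by standard bounds (e.g.\ via Gaussian elimination with Hadamard's inequality) they are rationals whose bit-size is polynomial in the bit-size of~$A$, and they are computable in polynomial time. Since $\Delta$ is also trivially computable in polynomial time, the whole reduction is polynomial, and NP-hardness of RK1-MATRIX-INTERVAL SINGULARITY follows.
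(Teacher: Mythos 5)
Your proposal is exactly the reduction the paper intends: apply the preceding theorem (on $A^{-1}$ and $\Delta=(1/K)J$) to the non-singular-restricted MATRIX R-NORM instances, which were already shown NP-hard. The paper leaves the corollary's proof implicit (just \qed), and your write-up supplies the routine details correctly, including the $K\le 0$ edge case and the polynomial bit-size bound for $A^{-1}$.
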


\begin{remark}
Poljak and Rohn~\cite{PolRoh:Checking-robust} show that
RK1-MATRIX-INTERVAL SINGULARITY belongs to the class NP by proving the
existence of a singular matrix in every singular matrix interval, with
a polynomial bound on the size of all entries of that matrix.
\end{remark}

\section{Reduction from RK1-MATRIX-INTERVAL SINGULARITY to P-MATRIX}

The described reduction is by Coxson~\cite{Cox:The-P-matrix}.

Let $A,\Delta\in\R^{n\times n}$.
Consider the matrix interval $[A, A+\Delta]$. Let $\Delta^{i,j}$ be the
matrix whose element in the $i$th row and $j$th column is~$\Delta_{i,j}$
and which has zeros elsewhere. Then each matrix~$M$ in the interval
$[A,A+\Delta]$ can be uniquely expressed as
\begin{equation}
\label{eq:1}
M = A + \sum_{i,j=1}^{n} p_{i,j} \Delta^{i,j},
\end{equation}
where $p_{i,j}\in[0,1]$ for all values of~$i,j$.

Each matrix $\Delta^{i,j}$ is a rank-1 matrix (even if $\Delta$~has
higher rank), and so $\Delta^{i,j}=r_{i,j}s_{i,j}\tr$ for some
vectors $r_{i,j},s_{i,j}\in\R^n$. We can actually take $r_{i,j}$ to
be~$\Delta_{i,j}$ in its $i$th entry and zero elsewhere, and $s_{i,j}$
to be~$1$ in its $j$th entry and zero elsewhere.

Now let $R$ be the matrix whose columns are all the $n^2$
vectors~$r_{i,j}$ and let $S$ be the matrix whose columns are all the
$n^2$ vectors~$s_{i,j}$. Thus $\Delta=RS\tr$. Moreover, if $p\in\R^{n^2}$
is the vector formed by the numbers~$p_{i,j}$, we can write~\eqref{eq:1} as
\[M = A + R D(p) S\tr.\]

Suppose that $A$~is non-singular. Then the matrix interval $[A,A+\Delta]$
is non-singular if and only if
\begin{equation}
\label{eq:2}
\det(A+RD(p)S\tr) = \det(A)\det(I_n + A^{-1}RD(p)S\tr)\ne 0
\end{equation}
for each vector $p\in[0,1]^{n^2}$.

Supposing that the matrix~$A$ is non-singular, inequality~\eqref{eq:2}
holds if and only if
\begin{equation}
\label{eq:3}
\det(I_n + A^{-1}RD(p)S\tr) \ne 0.
\end{equation}

In this way we have proved that for a non-singular matrix~$A$,
singularity of the matrix interval $[A,A+\Delta]$ is equivalent to
the existence of a vector $p\in[0,1]^{n^2}$ that does not satisfy
inequality~\eqref{eq:3}. Since the expression in~\eqref{eq:3} is a
multi-affine function of~$p$, we can actually derive another condition.

\begin{lemma}
\label{lem:multaf}
Let $\psi(p)=\det(I_n+A^{-1}RD(p)S\tr)$. Then inequality~\eqref{eq:3}
holds for each $p\in[0,1]^{n^2}$ if and only if $\psi(p)>0$ for each
$p\in\{0,1\}^{n^2}$.
\end{lemma}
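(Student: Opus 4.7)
The plan is to combine two ingredients: the value $\psi(0)=\det(I_n)=1>0$ and the multi-affine nature of $\psi$ in the variable $p$. The reason $\psi$ is multi-affine is that $D(p)=\sum_{i,j}p_{i,j}e_{(i,j)}e_{(i,j)}\tr$, so varying a single $p_{k,l}$ with the others fixed adds a rank-one matrix of the form $p_{k,l}u v\tr$ to a constant matrix; by the matrix determinant lemma, $\det$ of such a matrix is affine in $p_{k,l}$. I would state this explicitly as the first step, since it is the one (minor) obstacle the argument rests on.

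For the forward direction, I would argue by connectedness. Assume $\psi$ never vanishes on $[0,1]^{n^2}$. The cube is connected, $\psi$ is continuous (indeed polynomial), and nowhere zero, so $\psi$ has constant sign on the cube. Evaluating at $p=0$ gives $\psi(0)=\det(I_n)=1>0$, so $\psi>0$ on the entire cube and in particular on the vertex set $\{0,1\}^{n^2}$.

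For the reverse direction, I would use the standard fact that a multi-affine function on a box attains its extrema at the vertices. Concretely, fix all coordinates except $p_{k,l}$; the restriction to the segment $p_{k,l}\in[0,1]$ is affine, so its minimum occurs at $0$ or $1$. Iterating this reduction over the $n^2$ coordinates shows that $\min_{[0,1]^{n^2}}\psi=\min_{\{0,1\}^{n^2}}\psi$. Under the hypothesis $\psi>0$ on $\{0,1\}^{n^2}$, this minimum is strictly positive, so $\psi>0$ (and in particular $\psi\ne0$, i.e.\ \eqref{eq:3} holds) on all of $[0,1]^{n^2}$.

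The only real content is the multi-affineness observation and the identification $\psi(0)=1$; once these are in hand both implications are essentially one-line topological or optimization arguments on the hypercube, and no further delicate case analysis is needed.
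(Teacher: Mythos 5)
Your proof is correct and rests on the same two observations the paper uses: $\psi$ is multi-affine and $\psi(0)=\det I_n=1>0$, so everything reduces to the sign of $\psi$ at the vertices of the cube. Where you differ is in how the vertex reduction is established. The paper proves both directions by exploiting the hypercube's edge structure explicitly: for the forward direction it walks along edges to find two adjacent vertices of opposite sign and solves for a zero on the joining segment, and for the converse it inducts on face dimension. You instead handle the forward direction by connectedness of the cube and continuity of $\psi$ (which, worth noting, does not use multi-affinity at all), and the reverse direction by the equivalent but tidier observation that a multi-affine function on a box attains its minimum at a vertex, iterating the one-variable affine reduction. Both routes work; yours is a little more abstract and a little shorter. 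One small caveat in your multi-affinity justification (which is itself a welcome addition, since the paper merely asserts it): the matrix determinant lemma in its familiar form $\det(B+uv\tr)=\det(B)\bigl(1+v\tr B^{-1}u\bigr)$ presupposes $B$ invertible, and the matrix obtained by freezing all but one $p_{k,l}$ need not be. Either invoke the adjugate form $\det(B+uv\tr)=\det B+v\tr\operatorname{adj}(B)\,u$, which holds unconditionally, or simply note that varying $p_{k,l}$ changes only column $l$ of $I_n+A^{-1}RD(p)S\tr$, and does so linearly, so affinity in $p_{k,l}$ follows directly from multilinearity of the determinant in the columns.
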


\begin{proof}
First observe that $\psi(p)=\det(I_n+A^{-1}RD(p)S\tr)$ is a multi-affine
function of~$p$, that is, for each~$i$ we have $\psi(p)=c_1+c_2p_i$,
where $c_1$,~$c_2$ depend on~$i$ and~$p_j$ for $j\ne i$.

We claim that any multi-affine function~$\phi:[0,1]^k\to\R$ is non-zero on
the whole domain if and only if its values on the vertices~$\{0,1\}^k$
have all the same sign. Assuming this claim holds, we notice that
$\psi(0)=\det I_n=1>0$, so $\psi$~is non-zero on~$[0,1]^{n^2}$ if and
only if it is positive on~$\{0,1\}^k$.

To prove the claim, first suppose that $\phi$~is non-zero on~$[0,1]^k$
but there are two vertices $u,v\in\{0,1\}^k$ such that $\phi(u)<0$ and
$\phi(v)>0$. Following the path along the edges of~$\{0,1\}$, we will find
two vertices $u',v'\in\{0,1\}$ that differ in exactly one coordinate and
such that $\phi(u')<0$ and $\phi(v')>0$. Without loss of generality we
may assume that $u'_1=0$ and $v'_1=1$, while $u'_i=v'_i$ for $i\ge2$.
Let $x\in[0,1]^k$ be defined by $x_1=\phi(u')/(\phi(u')-\phi(v'))$
and $x_i=u'_i$ for $i\ge2$.  Then $\phi(x)=0$, a contradiction.

Conversely, if $\phi$~is positive (negative) on all the vertices, it
is easy to prove by induction on face dimension that $\phi$~is positive
(negative) in every internal point of each face.
\end{proof}

Lemma~\ref{lem:multaf} together with the discussion that precedes it
imply the following characterisation.

\begin{lemma}
\label{lem:rs}
Let $A$ be a non-singular matrix and let $R,S$ be defined as above.
Then the matrix interval $[A,A+\Delta]$ is singular if and only if
\[\det(I_n+A^{-1}RD(p)S\tr) \leq 0\]
for some $p\in\{0,1\}^{n^2}$.
\qed
\end{lemma}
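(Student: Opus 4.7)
The plan is to observe that this lemma is essentially an immediate combination of two ingredients already set up: the discussion preceding Lemma~\ref{lem:multaf}, which translates interval singularity into a determinantal condition, and Lemma~\ref{lem:multaf} itself, which reduces the continuous parameter set $[0,1]^{n^2}$ to the discrete vertex set $\{0,1\}^{n^2}$.

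First I would recall from the discussion above that, since $A$ is non-singular, the matrix interval $[A,A+\Delta]$ is non-singular precisely when $\det(I_n+A^{-1}RD(p)S\tr)\ne 0$ for every $p\in[0,1]^{n^2}$, i.e.\ when inequality~\eqref{eq:3} holds on the whole cube. Setting $\psi(p)=\det(I_n+A^{-1}RD(p)S\tr)$ as in Lemma~\ref{lem:multaf}, this means $[A,A+\Delta]$ is non-singular if and only if $\psi(p)\ne 0$ for all $p\in[0,1]^{n^2}$.

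Next, by Lemma~\ref{lem:multaf} this last condition is equivalent to $\psi(p)>0$ for every $p\in\{0,1\}^{n^2}$. Taking the contrapositive, $[A,A+\Delta]$ is singular if and only if there exists some $p\in\{0,1\}^{n^2}$ with $\psi(p)\leq 0$, which is exactly the stated characterisation.

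There is no real obstacle here; the lemma is just a bookkeeping step that packages the preceding derivation and Lemma~\ref{lem:multaf} into a clean statement, which is why the author marks it with \texttt{\textbackslash qed} directly after the statement. The only thing to be careful about is making sure that the non-singularity of $A$ (which is required to pass between \eqref{eq:2} and \eqref{eq:3}) is invoked, and that the weak inequality $\psi(p)\leq 0$ correctly captures the negation of \emph{both} possible failure modes of $\psi(p)>0$, namely $\psi(p)=0$ (a singular matrix actually lies in the interval at that vertex) and $\psi(p)<0$ (by Lemma~\ref{lem:multaf} a sign change forces $\psi$ to vanish somewhere in $[0,1]^{n^2}$).
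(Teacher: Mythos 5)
Your proof is correct and matches the paper's intended argument exactly: the paper marks the lemma with \verb|\qed| precisely because, as you say, it is the direct combination of the preceding discussion (reducing interval singularity to condition~\eqref{eq:3}) and Lemma~\ref{lem:multaf} (reducing the cube to its vertices), followed by taking the contrapositive.
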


In order to get $D(p)$ from the middle of the product to the beginning, we
use the following lemma, whose proof we present in the Appendix.

\begin{lemma}
\label{lem:Gan}
Let $F\in\R^{k\times n}$ and $G\in\R^{n\times k}$. Then $\det(I_k+FG)=\det(I_n+GF)$.
\qed
\end{lemma}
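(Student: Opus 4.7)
My plan is to reduce both sides to the determinant of a common $(k+n)\times(k+n)$ block matrix by applying the Schur complement formula in two different ways. Concretely, I would consider
\[
M = \begin{pmatrix} I_k & -F \\ G & I_n \end{pmatrix}
\]
and compute $\det M$ by block Gaussian elimination, pivoting first on the top-left block and then on the bottom-right block.

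Pivoting on the top-left block $I_k$ (trivially invertible) clears $G$ and leaves the Schur complement $I_n - G\cdot I_k^{-1}\cdot(-F) = I_n + GF$ in the bottom-right, so $\det M = \det(I_k)\cdot\det(I_n + GF) = \det(I_n + GF)$. Pivoting instead on the bottom-right block $I_n$ clears $-F$ and leaves the Schur complement $I_k - (-F)\cdot I_n^{-1}\cdot G = I_k + FG$, so $\det M = \det(I_k + FG)$. Equating these two expressions yields the claimed identity.

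The only ingredient to verify is the Schur complement determinant formula $\det\begin{pmatrix}A & B \\ C & D\end{pmatrix} = \det A\cdot\det(D - CA^{-1}B)$ when $A$ is invertible, which follows in one line from the block factorisation
\[
\begin{pmatrix}A & B \\ C & D\end{pmatrix} = \begin{pmatrix}I & 0 \\ CA^{-1} & I\end{pmatrix}\begin{pmatrix}A & B \\ 0 & D-CA^{-1}B\end{pmatrix}
\]
together with multiplicativity of the determinant. I do not anticipate any substantive obstacle: the argument is completely symmetric in $F$ and $G$, and the two Schur-complement computations are mirror images of one another.
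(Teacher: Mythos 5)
Your proof is correct and takes essentially the same approach as the paper: both compute the determinant of the same $(k+n)\times(k+n)$ block matrix $\begin{pmatrix}I_k & -F\\ G & I_n\end{pmatrix}$ in two ways by block elimination, the only cosmetic difference being that the paper phrases the elimination as block row/column operations (its Lemma~\ref{lem:la}) while you invoke the equivalent Schur-complement determinant identity.
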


This fact can be exploited to prove the following equivalence.

\begin{theorem}
\label{thm:pmat}
Let $A$ be a non-singular matrix and let $R,S$ be defined as in Lemma~\ref{lem:rs}.
Then the matrix interval $[A,A+\Delta]$ is singular if and only if
the matrix $M=I_{n^2} +  S\tr\!A^{-1}R$ is not a P-matrix.
\end{theorem}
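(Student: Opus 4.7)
The plan is to chain together Lemma~\ref{lem:rs} and Lemma~\ref{lem:Gan} and then interpret the resulting determinant as a principal minor of $M$. Starting from Lemma~\ref{lem:rs}, the matrix interval $[A,A+\Delta]$ is singular if and only if there exists $p\in\{0,1\}^{n^2}$ such that
\[\det\bigl(I_n+A^{-1}RD(p)S\tr\bigr)\le 0.\]
Applying Lemma~\ref{lem:Gan} with the choice $F=A^{-1}R$ and $G=D(p)S\tr$ (both rectangular of complementary sizes $n\times n^2$ and $n^2\times n$), this determinant equals
\[\det\bigl(I_{n^2}+D(p)\,S\tr A^{-1}R\bigr)=\det\bigl(I_{n^2}+D(p)(M-I_{n^2})\bigr),\]
where $M=I_{n^2}+S\tr A^{-1}R$.

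The next step, which is where the real content of the reduction lies, is to identify this last determinant as a principal minor of~$M$. Write $I=\{i:p_i=1\}\subseteq\{1,\dotsc,n^2\}$. The matrix $I_{n^2}+D(p)(M-I_{n^2})$ agrees with~$M$ in the rows indexed by~$I$ and equals the $i$th standard basis row $e_i\tr$ in every row $i\notin I$. Cofactor expansion along these ``trivial'' rows (each contributes a single~$1$ on the diagonal) yields
\[\det\bigl(I_{n^2}+D(p)(M-I_{n^2})\bigr)=\det\bigl(M_{I,I}\bigr),\]
the principal minor of~$M$ indexed by~$I$.

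Combining the three identities, for every $p\in\{0,1\}^{n^2}$ the quantity $\det(I_n+A^{-1}RD(p)S\tr)$ is precisely the principal minor $\det(M_{I,I})$, and this correspondence is a bijection between $\{0,1\}^{n^2}$ and subsets of $\{1,\dotsc,n^2\}$ (with $p=0$ corresponding to the empty principal minor, which is $\det(I_n)=1>0$). Consequently, the existence of a ``bad'' $p$ with $\det(I_n+A^{-1}RD(p)S\tr)\le 0$ is equivalent to the existence of a principal minor of~$M$ that is non-positive, i.e.\ to $M$ failing to be a P-matrix. Chaining this with Lemma~\ref{lem:rs} yields the theorem.

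The only delicate point is the bookkeeping in the second paragraph: one has to check that expanding along the rows outside~$I$ really does reduce the $n^2\times n^2$ determinant to the $|I|\times|I|$ principal submatrix of~$M$, without any sign change or permutation issue. This is a routine cofactor argument, but it is the step on which the whole equivalence hinges, since it is what converts a statement about $n^2$ Boolean parameters into a statement about principal minors of an $n^2\times n^2$ matrix.
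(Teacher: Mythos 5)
Your proof is correct and takes essentially the same route as the paper: Lemma~\ref{lem:rs}, then Lemma~\ref{lem:Gan} to move $D(p)$ to the front, then the identification of $\det(I_{n^2}+D(p)S\tr\!A^{-1}R)$ with the principal minor of $M$ indexed by the support of~$p$. Your explicit cofactor-expansion argument merely spells out a step the paper asserts without proof.
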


\begin{proof}
Because of Lemma~\ref{lem:Gan},
\[\psi(p)=\det(I_{n^2} + A^{-1}RD(p)S\tr)=\det(I_{n^2} + D(p)S\tr\!A^{-1}R).\]

If $p\in\{0,1\}^{n^2}$ and $p\ne0$, the expression
$\det(I_{n^2} + D(p)S\tr\!A^{-1}R)$ is equal to the principal minor
of the matrix~$M$ obtained by selecting exactly those rows and columns
that correspond to the $1$-entries of the vector~$p$.  Thus $\psi(p)$ is
non-positive for some $p\in\{0,1\}^{n^2}$ if and only if the matrix~$M$
is not a P-matrix.

The proof is now completed by applying Lemma~\ref{lem:rs}.
\end{proof}

\begin{cor}
The problem P-MATRIX is co-NP-complete.
\end{cor}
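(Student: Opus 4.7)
My plan is to prove the two usual directions separately: membership in co-NP and co-NP-hardness, with the latter being the substantive step that combines Corollary~\ref{cor:sing} with Theorem~\ref{thm:pmat}.

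For membership in co-NP, I would give a short NP certificate for the complementary problem \emph{``is $M$ not a P-matrix?''}. A non-deterministic Turing machine guesses a non-empty subset $I\subseteq\{1,\dots,n\}$, extracts the principal submatrix $M_I$ indexed by $I$, and computes $\det(M_I)$ in polynomial time (e.g.\ by Gaussian elimination over $\Q$). It accepts iff $\det(M_I)\le0$. The bit-size of the determinant is polynomially bounded in the bit-size of $M$, so this runs in polynomial time.

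For co-NP-hardness, I would reduce RK1-MATRIX-INTERVAL SINGULARITY (which is NP-hard by Corollary~\ref{cor:sing}) to the complement of P-MATRIX. Given an instance $(A,\Delta)$ with $A$ non-singular and $\Delta$ non-negative of rank~$1$, the natural move is to set $A' = A-\Delta$ and $\Delta' = 2\Delta$, so that $[A',A'+\Delta'] = [A-\Delta,A+\Delta]$, and then invoke Theorem~\ref{thm:pmat}: the matrix interval is singular iff $M := I_{n^2} + S\tr(A')^{-1}R$ is not a P-matrix, where $R,S\in\R^{n\times n^2}$ are the matrices built from the $r_{i,j},s_{i,j}$ decomposition of $\Delta'$. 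Computing $A'$, testing whether it is non-singular, inverting it, and forming $M$ are all polynomial-time operations producing rational entries of polynomial bit-size, so the output matrix $M$ has size $n^2\times n^2$ and fits in polynomial space.

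The main obstacle I anticipate is a boundary case: Theorem~\ref{thm:pmat} requires $A'=A-\Delta$ to be non-singular, and nothing in the RK1 instance guarantees this. I would handle it by first computing $\det(A-\Delta)$; if it vanishes, then $A-\Delta$ itself witnesses singularity of $[A-\Delta,A+\Delta]$, so the reduction outputs a fixed small matrix that fails to be a P-matrix (for instance the $1\times 1$ matrix $[-1]$). Otherwise $A'$ is non-singular and the construction above applies verbatim. Combining the two cases, $(A,\Delta)$ is a YES instance of RK1-MATRIX-INTERVAL SINGULARITY iff $M$ is a YES instance of the complement of P-MATRIX, which, together with the co-NP membership above, gives the corollary.
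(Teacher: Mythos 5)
Your proof is correct and follows essentially the same route as the paper: co-NP membership via a guessed index set and a polynomial-time determinant computation, and co-NP-hardness by chaining Corollary~\ref{cor:sing} with Theorem~\ref{thm:pmat}. The one place you go beyond the paper's terse ``follows from'' argument is the boundary case where $A-\Delta$ is singular (so Theorem~\ref{thm:pmat}, which needs the left endpoint of the interval to be non-singular, cannot be invoked directly); your fix --- test $\det(A-\Delta)$ first and emit a fixed non-P-matrix when it vanishes --- is exactly the right patch and makes the reduction fully rigorous.
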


\begin{proof}
NP-hardness follows from Corollary~\ref{cor:sing} and Theorem~\ref{thm:pmat}.

The problem belongs to co-NP because after guessing the rows and columns,
the corresponding principal minor, which certifies the negative answer,
can be computed in polynomial time.
\end{proof}

\section*{Appendix: Proof of Lemma~\ref{lem:Gan}}

One of the basic facts about determinants is that adding a multiple
of a row to another row does not change the determinant. The
following lemma (Theorem~3 in Section~2.5 of Gantmacher's
book~\cite{Gan:The-Theory-of-Matrices}) is a block version of this
fact. Even though it holds for matrices with an arbitrary number of
blocks, we state it just for $2\times2$ blocks. This variant is sufficient
for the proof of Lemma~\ref{lem:Gan}.

\begin{lemma}
\label{lem:la}
Let $A\in\R^{m\times n}$ be a matrix with block structure
\[A=
\bordermatrix{
{\scriptstyle m_1}\{&\overbrace{A_{1,1}}^{n_1}&\overbrace{A_{1,2}}^{n_2}\cr
{\scriptstyle m_2}\{&A_{2,1}&A_{2,2}\cr
}
\]
and let $X\in\R^{m_1\times m_2}$, $Y\in\R^{n_1\times n_2}$. Then
\[
\det A=
\det\begin{pmatrix}
A_{1,1}+XA_{2,1}&A_{1,2}+XA_{2,2}\\
A_{2,1}&A_{2,2}
\end{pmatrix}\\
=
\det\begin{pmatrix}
A_{1,1}&A_{1,2}+A_{1,1}Y\\
A_{2,1}&A_{2,2}+A_{2,1}Y
\end{pmatrix}.
\]
\end{lemma}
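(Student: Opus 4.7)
The plan is to recognise each of the two modified block matrices as the product of~$A$ with a block-triangular unipotent matrix whose determinant is~$1$; the result will then follow immediately from multiplicativity of~$\det$. The statement implicitly assumes $m=n$ (so that $\det A$ makes sense), and the shape conditions $X\in\R^{m_1\times m_2}$, $Y\in\R^{n_1\times n_2}$ are exactly what is needed for the block products in the sequel to be well-defined.

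For the first equality, I would simply compute
\[
\begin{pmatrix} I_{m_1} & X \\ 0 & I_{m_2} \end{pmatrix}
\begin{pmatrix} A_{1,1} & A_{1,2} \\ A_{2,1} & A_{2,2} \end{pmatrix}
=
\begin{pmatrix} A_{1,1}+XA_{2,1} & A_{1,2}+XA_{2,2} \\ A_{2,1} & A_{2,2} \end{pmatrix}.
\]
The left factor is block upper triangular with identity diagonal blocks, so its determinant equals the product of the determinants of those diagonal blocks, namely~$1$. Taking determinants of both sides therefore yields the first claimed identity.

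For the second equality I would apply the symmetric trick on the right, writing
\[
\begin{pmatrix} A_{1,1} & A_{1,2} \\ A_{2,1} & A_{2,2} \end{pmatrix}
\begin{pmatrix} I_{n_1} & Y \\ 0 & I_{n_2} \end{pmatrix}
=
\begin{pmatrix} A_{1,1} & A_{1,2}+A_{1,1}Y \\ A_{2,1} & A_{2,2}+A_{2,1}Y \end{pmatrix},
\]
where again the inserted factor is block upper triangular with identity diagonal blocks and therefore has determinant~$1$.

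There is no real obstacle here: the proof is essentially the observation that the block version of ``add a multiple of one row (resp.\ column) to another'' corresponds to left (resp.\ right) multiplication by a block-unipotent matrix of determinant~$1$. The only thing to be careful about is bookkeeping of block sizes, but this is already built into the hypotheses on the shapes of~$X$ and~$Y$.
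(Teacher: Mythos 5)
Your proof is correct and coincides with the paper's own argument: both establish each identity by factoring the modified block matrix as the product of $A$ with a block-unipotent (determinant-$1$) triangular matrix, on the left for the row operation and on the right for the column operation, and then invoke multiplicativity of the determinant. Your observation that $m=n$ is implicitly assumed is a reasonable point of hygiene that the paper leaves unstated.
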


\begin{proof}
Since
\[
\begin{pmatrix}
A_{1,1}+XA_{2,1}&A_{1,2}+XA_{2,2}\\
A_{2,1}&A_{2,2}
\end{pmatrix}
=
\begin{pmatrix}
I_{m_1}&X\\
0&I_{m_2}
\end{pmatrix}
A,
\]
we have
\[
\det\begin{pmatrix}
A_{1,1}+XA_{2,1}&A_{1,2}+XA_{2,2}\\
A_{2,1}&A_{2,2}
\end{pmatrix}
=
\det\begin{pmatrix}
I_{m_1}&X\\
0&I_{m_2}
\end{pmatrix}
\cdot
\det A
=
\det A.
\]
Similarly
\[
\det\begin{pmatrix}
A_{1,1}&A_{1,2}+A_{1,1}Y\\
A_{2,1}&A_{2,2}+A_{2,1}Y
\end{pmatrix}
=
\det A\cdot
\det
\begin{pmatrix}
I_{n_1}&Y\\
0&I_{n_2}
\end{pmatrix}
=\det A.
\qedhere
\]
\end{proof}

Finally comes the proof of Lemma~\ref{lem:Gan}.

\begin{proof}[Proof of Lemma~\ref{lem:Gan}]
Applying Lemma~\ref{lem:la} twice, we get
\begin{multline*}
\det(I_k+FG)=
\det \begin{pmatrix}
I_k+FG&0\\
G&I_n
\end{pmatrix}
\stackrel{(*)}{=}
\det\begin{pmatrix}
I_k&-F\\
G&I_n
\end{pmatrix}
\\
\stackrel{(\dagger)}{=}
\det\begin{pmatrix}
I_k&0\\
G&I_n+GF
\end{pmatrix}
=\det(I_n+GF).
\end{multline*}
Here $(*)$ follows by applying Lemma~\ref{lem:la} to rows with $X=F$
and $(\dagger)$ follows by applying it to columns with $Y=F$.
\end{proof}

\end{document}